\newcommand{\Exp}[1]{\mathbb{E}\left[#1\right]}
\newcommand{\id}{{\mathds 1}}
\newcommand{\alow}{\alpha^{\textrm{low}}}
\newcommand{\ahigh}{\alpha^{\textrm{high}}}
\newcommand{\abuc}{\alpha^*_\textrm{BUC}}
\newcommand{\pfree}{p_\textrm{free}}
\newcommand{\ox}{\overline{x}}
\newcommand{\ds}{\mathrm{d}s}
\newcommand{\dt}{\mathrm{d}t}
\theoremstyle{definition}
\newtheorem{theorem}{Theorem}
\newtheorem{conjecture}{Conjecture}
\begin{document}

\title{The Power of Choice for Random Satisfiability}

\author{
Varsha Dani \\ University of New Mexico
\and Josep Diaz \\ Universitat Polit\`ecnica de Catalunya
\and Thomas Hayes \\ University of New Mexico
\and Cristopher Moore \\ Santa Fe Institute}

\maketitle

\begin{abstract}
We consider Achlioptas processes for $k$-SAT formulas.  We create a semi-random formula with $n$ variables and $m$ clauses, where each clause is a choice, made on-line, between two or more uniformly random clauses.  Our goal is to delay the satisfiability/unsatisfiability transition, keeping the formula satisfiable up to densities $m/n$ beyond the satisfiability threshold $\alpha_k$ for random $k$-SAT.  We show that three choices suffice to delay the transition for any $k \ge 3$, and that two choices suffice for all $3 \le k \le 25$.  We also show that two choices suffice to lower the threshold for all $k \ge 3$, making the formula unsatisfiable at a density below $\alpha_k$.  
\end{abstract}

\section{Introduction}


The Erd\H{o}s-R\'enyi model of random graphs undergoes a celebrated phase transition.  Specifically, suppose we form a random graph $G(n,m)$ with $n$ vertices and $m$ edges by choosing $m$ times uniformly from the ${n \choose 2}$ possible edges.  The average degree of this graph is $d=2m/n$.  If $d < 1$, then with high probability in the limit $n \to \infty$, $G(n,m)$ consists almost entirely of trees, and the largest component has size $O(\log n)$.  But if $d > 1$, then with high probability $G(n,m)$ has a giant connected component containing $\Theta(n)$ vertices.

In 2001, Dimitris Achlioptas posed the following question.  Suppose at each step we are presented with \emph{two} uniformly random edges.  We are allowed to choose between them, adding one of them to the graph and throwing away the other.  We play this game on-line; that is, our choice can depend on the graph up to this point, but not on future pairs of edges.  Can we delay the appearance of the giant component, ensuring that the largest component has size $o(n)$ after $m=c n$ edges for some $c > 1/2$?  

A positive answer was given by Bohman and Frieze~\cite{BohmanF01}, who showed that two choices suffice to delay the giant up to $c = 0.535$.  Achlioptas, D'Souza, and Spencer~\cite{explosive} studied a particular rule where we choose the edge that minimizes the product of the component sizes of its endpoints, which exhibits a phenomenon they call \emph{explosive percolation}.  It is also possible to speed up the appearance of the giant component~\cite{flaxman-gamarnik-sorkin,bohman-kravitz}; Spencer and Wormald~\cite{spencer-wormald} showed that it can be brought into existence at $c = 0.334$.

In analogy with $G(n,m)$, we can consider random $k$-SAT formulas $F_k(n,m)$.  Specifically, given $n$ variables $x_1, \ldots, x_n$, we create a $k$-SAT formula by choosing $m$ clauses uniformly from the $2^k {n \choose k}$ possibilities.  The \emph{satisfiability threshold conjecture} states that there is a critical density $\alpha_k = m/n$ at which $F_k(n,m)$ undergoes a phase transition from satisfiable to unsatisfiable:
\begin{conjecture}
For each $k \ge 2$, there is a constant $\alpha_k$ such that 
\[
\lim_{n \to \infty} \Pr[\mbox{$F_k(n,\alpha n)$ is satisfiable}] = 
\begin{cases} 
1 & \alpha < \alpha_k \\
0 & \alpha > \alpha _k \, .
\end{cases} 
\]
\end{conjecture}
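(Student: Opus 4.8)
The conjecture is one of the central open problems in the area, and any realistic plan has to separate the two statements it bundles together: that a threshold \emph{location} exists at all, i.e.\ that satisfiability has a \emph{sharp} threshold, and that this location converges to a constant as $n \to \infty$. For the first part I would invoke the Friedgut--Bourgain sharp-threshold theorem: satisfiability of $F_k(n,m)$ is monotone decreasing in $m$ and invariant under the natural symmetry group acting on variables and literals, so a \emph{coarse} threshold would force satisfiability to be well approximated by the appearance of some fixed bounded subformula; one rules this out by exhibiting, for any candidate subformula, random formulas on either side of the critical window that agree on it. This yields a function $\alpha_k(n)$ with $F_k(n,(1-\eps)\alpha_k(n)n)$ satisfiable and $F_k(n,(1+\eps)\alpha_k(n)n)$ unsatisfiable, both whp, so everything reduces to showing $\alpha_k(n) \to \alpha_k$ for a constant $\alpha_k$.

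For the upper bound I would run the first-moment method: if $X$ counts satisfying assignments of $F_k(n,\alpha n)$ then $\Exp{X} = \bigl(2(1-2^{-k})^{\alpha}\bigr)^n$, which tends to $0$ once $\alpha > 2^k\ln 2$ up to lower-order terms, so Markov's inequality gives unsatisfiability whp there; to push the constant down toward the predicted value one restricts the sum to ``rigid'' assignments in which every variable is pinned by some clause, removing the entropy contributed by free variables. For the lower bound the tool is the second-moment method, but applied not to $X$ itself (whose second moment is exponentially too large, because nearby satisfying assignments are positively correlated) but to a weighted count $\tilde X = \sum_\sigma \prod_{c} w(\sigma,c)$ in which each clause's factor depends only on how many of its literals $\sigma$ satisfies; choosing $w$ in the style of Achlioptas--Peres so that the dominant contribution to $\Exp{\tilde X^2}$ comes from \emph{uncorrelated} pairs $\sigma,\tau$ gives $\Exp{\tilde X^2} = O(\Exp{\tilde X}^2)$, hence satisfiability with probability bounded away from zero by Paley--Zygmund, which the sharp-threshold result upgrades to whp.

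The crux is that these two bounds leave a multiplicative gap, and closing it amounts to confirming the $1$-step replica-symmetry-breaking prediction of the cavity method: that the true threshold is the point where the ``complexity'' functional vanishes, and that the refined first- and second-moment estimates actually meet there. The obstruction is the geometry of the solution space near the threshold --- it fragments into exponentially many small clusters, and a condensation transition makes a bounded number of clusters carry most of the mass --- so the weighted second moment must be combined with a precise analysis of cluster sizes and with small-subgraph conditioning to absorb the $O(1)$ multiplicative fluctuations. I expect this matching step to be where the real difficulty lies. It can be pushed through for all sufficiently large $k$, via an interpolation bound showing the free energy is at most its cavity value together with a second-moment lower bound on the number of clusters; for small $k$, and in particular the most-studied case $k=3$, the cluster structure is not understood well enough and the conjecture remains open. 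The sole completely settled instance is $k=2$, where $\alpha_2=1$ and elementary subcritical/supercritical branching-process arguments on the implication digraph suffice.
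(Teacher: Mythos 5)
The statement you were asked about is not a theorem of the paper at all: it is the satisfiability threshold \emph{conjecture}, which the paper states precisely as a conjecture and explicitly notes has been proved only for $k=2$. The paper offers no proof, and nothing in its later sections (which give conditional and unconditional results about raising or lowering the threshold via Achlioptas-type choices) depends on resolving it. So there is no ``paper's own proof'' to compare against, and the relevant question is whether your proposal actually establishes the conjecture. It does not, and you say as much yourself: your plan is an accurate survey of the standard toolkit --- Friedgut's sharp-threshold theorem, the first-moment upper bound $2^k \ln 2$, the Achlioptas--Peres weighted second moment, and the cavity-method / 1RSB picture of clustering and condensation --- but the decisive step, showing that the refined upper and lower bounds actually \emph{meet}, is exactly the open problem, and your proposal defers it rather than solving it.

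To name the gap concretely: Friedgut's theorem yields a sequence $\alpha_k(n)$ around which the transition is sharp, but gives no control on whether $\alpha_k(n)$ converges; the conjecture is precisely the assertion that it tends to a constant, so the ``everything reduces to showing $\alpha_k(n) \to \alpha_k$'' step is the whole content of the statement, not a reduction. The first- and second-moment bounds, even in their weighted Achlioptas--Peres form, leave a gap (an additive $\Theta(k)$ in the exponent's coefficient, cf.\ the bounds $2^k \ln 2 - O(k) \le \alpha_k < 2^k \ln 2 - \tfrac{\ln 2}{2}$ quoted in the paper), and your sketch of how to close it --- interpolation bounds, cluster counting, small-subgraph conditioning --- is a description of a research program, not an argument; you concede it fails for small $k$, including $k=3$. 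So the proposal is a reasonable and well-informed account of why the conjecture is believed and how one might attack it, but it is not a proof, and it should not be presented as one; the honest conclusion, consistent with both the paper and your own final paragraph, is that the statement remains a conjecture except for $k=2$.
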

\noindent
This conjecture has been proved only for $k=2$~\cite{ChvatalR92,Goerdt96,FernandezVega01}, where $\alpha_2 = 1$.  For the NP-complete case $k \ge 3$, there are strong arguments from statistical physics that it is true, and very precise conjectures for the value of $\alpha_k$ from calculations using the cavity method~\cite{Moore-Mertens,MezardBook}.  

There are rigorous upper and lower bounds on $\alpha_k$ assuming it exists.  That is, there are known values $\alow_k, \ahigh_k$ such that $F_k(n,\alpha n)$ is satisfiable if $\alpha < \alow_k$ and unsatisfiable if $\alpha > \ahigh_k$.  In that case, we write $\alow_k \le \alpha_k \le \ahigh_k$.  Specifically, for $k=3$ we have~\cite{DiazSat09,HajiaghayiS03,Kirousis-SAT} 
\begin{equation}
\label{eq:alpha3}
3.52 \le \alpha_3 \le 4.4898 \, , 
\end{equation}
while the cavity method gives $\alpha_3 = 4.267$.  For arbitrary $k$, the first and second moment methods give~\cite{AchlioptasP03}
\begin{equation}
\label{eq:ach-peres}
2^k \ln 2 - O(k) \le \alpha_k <  2^k \ln 2 - \frac{\ln 2}{2} \, , 
\end{equation}
while the cavity method gives 
\[
\alpha_k = 2^k \ln 2 - \frac{1+\ln 2}{2} + O(2^{-k}) \, . 
\]

Sinclair and Vilenchik~\cite{SinclairV10} asked whether Achlioptas processes can delay the satisfiability/unsatisfiability transition for $k$-SAT.  In other words, suppose at each step we are given a choice of two clauses, each of which is uniformly random.  We choose one of them and add it to the formula, and our goal is keeping the formula satisfiable up to $m = \alpha n$ clauses for some $\alpha > \alpha_k$.  They showed that two choices are enough to delay the $2$-SAT transition up to $\alpha = 1.0002$, and also that two choices can delay the $k$-SAT transition for $k = \omega(\log n)$.  Perkins~\cite{perkins} showed that for any $k$, there is a strategy with $t$ choices, for a constant $t$, that delays the $k$-SAT transition.  In fact, his analysis shows that three choices suffice for sufficiently large $k$, and that $7$ choices suffice for all $k \ge 3$.

%

We improve these results in the following ways.  First, we give a simple, nonadaptive strategy that, given a choice between three clauses, increases the $k$-SAT threshold for all $k \ge 2$.  Secondly, we give a two-choice strategy that increases the threshold for all $3 \le k \le 25$, and we conjecture that it works for all large $k$ as well.  Finally, we give a simple two-choice strategy that lowers the threshold for all $k$.

\section{Three Choices Suffice to Raise the Threshold for all $k$}

In this section and the next, we show that a constant number of choices suffice to raise the satisfiability threshold.  Our strategy is simple and nonadaptive.  Indeed, it is oblivious to the ``topology'' of the formula, which variables appear together in clauses, and is sensitive only to the signs of the literals.  Given a choice of $t$ clauses, we choose the one with the largest number of positive literals.  

To show that the resulting $k$-SAT formula is satisfiable, we convert it into an $\ell$-SAT formula in the following way: for each $k$-SAT clause $c$, we form an $\ell$-SAT clause by taking $\ell$ of the most positive literals in $c$.  If the resulting $\ell$-SAT formula is satisfiable, then so is the original $k$-SAT formula.  In Theorem~\ref{thm:3suffices}, we use $\ell=2$; in Theorems~\ref{thm:k5to25}--\ref{thm:k3}, we use $\ell=3$.

We note that Perkins~\cite{perkins} used a similar strategy, with $\ell=2$, to show that a constant number of choices suffice for any $k$.  Here we improve his results, showing that three choices suffice.

\begin{theorem}
\label{thm:3suffices}
Three choices suffice to increase the $k$-SAT threshold for any $k \ge 2$.
\end{theorem}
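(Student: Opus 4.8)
The idea is to use the reduction described just before the theorem with $\ell = 2$: given three uniformly random $k$-clauses, pick the one with the most positive literals, then extract a $2$-SAT subformula by keeping the two most positive literals of the chosen clause. If this $2$-SAT formula is satisfiable, so is the $k$-SAT formula, and since $\alpha_2 = 1$ is known exactly, it suffices to show that the induced $2$-clauses form a satisfiable $2$-SAT instance at some density $m/n = \alpha > \alpha_k$. The plan is therefore: (i) compute the distribution of the sign pattern of the ``kept'' $2$-clause under the three-choice rule; (ii) observe that, conditioned on its sign pattern, the induced $2$-clause is on a uniformly random pair of variables, so the resulting instance is a random $2$-SAT formula with i.i.d.\ (nonuniform) sign patterns; (iii) apply a known satisfiability criterion for such biased random $2$-SAT.

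For step (i): each of the three random $k$-clauses has a number of positive literals that is $\mathrm{Binomial}(k, 1/2)$; we take the clause maximizing this count (breaking ties arbitrarily). Conditioned on the winning clause having $j$ positive literals out of $k$, its two kept literals are a uniformly random size-$2$ subset, so the kept $2$-clause is $(++)$, $(+-)$, or $(--)$ with probabilities $\binom{j}{2}/\binom{k}{2}$, $j(k-j)/\binom{k}{2}$, $\binom{k-j}{2}/\binom{k}{2}$ respectively. Averaging over the distribution of $j = \max$ of three binomials gives probabilities $(p_{++}, p_{+-}, p_{--})$ with $p_{++} > p_{--}$, i.e.\ the instance is biased toward clauses that are satisfied by the all-true assignment. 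The key quantitative point is that this bias must be strong enough.

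For step (iii): the natural tool is a result on random $2$-SAT with a prescribed distribution over the four literal-sign types — for instance, the branching-process / subcritical-core analysis, or more simply a direct first-moment-free argument: a biased random $2$-SAT formula in which $(++)$ and $(--)$ clauses are overrepresented relative to the ``balanced'' ones should remain satisfiable above density $1$. Concretely, I would try the simple route: show that the all-positive assignment, or a small perturbation of it, together with a unit-propagation/pure-literal style argument, or a direct contradiction-cycle analysis à la Chv\'atal--Reed, rules out the unsatisfiability witness (a bad cycle in the implication digraph). One then shows the implication digraph of the induced $2$-SAT instance is, whp, free of simultaneous short cycles in both $x_i \to \bar x_i$ and $\bar x_i \to x_i$ directions up to density $\alpha = 1 + \delta$ for some $\delta>0$, where $\delta$ depends on how far $(p_{++},p_{+-},p_{--})$ is from uniform. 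Since $\alpha_k < 2^k \ln 2$ grows like $2^k$ while here we are comparing against a constant-factor-of-$1$ density, wait — that cannot be right; so actually the gain must be measured differently.

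Let me reconsider: the point is \emph{not} that the $2$-SAT density exceeds $\alpha_k$, but that \emph{after} applying the choice rule at $k$-SAT density $\alpha = \alpha_k + \eps$, the induced $2$-SAT density is still below its own satisfiability threshold \emph{for the induced sign distribution}. Biased random $2$-SAT where $(+-)$-type clauses have probability $q < 1/2$ has a satisfiability threshold at density $1/(2q(1-q)) \cdot (\text{something}) > 1$ — in fact heavily biased $2$-SAT is satisfiable at arbitrarily high density, since if no clause is of mixed type the all-true assignment always works. So the correct final step is: (a) compute $q = p_{+-}/(p_{++}+p_{+-}+p_{--})$, the conditional probability that an induced $2$-clause is mixed, as a function of $k$; (b) quote the threshold for biased $2$-SAT, which for mixed-probability $q$ is $\alpha^{2\mathrm{SAT}}_q = \tfrac{1}{q}$ (the unsatisfiability comes only from chains of mixed and pure clauses forming contradictory cycles, and a clean computation gives threshold $1/q$ when $q$ is the mixed-fraction — I would pin this down via the implication-digraph bicyclic-structure criterion); (c) verify $\alpha^{2\mathrm{SAT}}_q = 1/q > \alpha_k$, i.e.\ $q < 1/\alpha_k$. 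Since $\alpha_k \le 2^k \ln 2$, it suffices that the max-of-three-binomials rule drives the mixed-fraction $q$ below $2^{-k}/\ln 2$, and one checks the $\max$ of three $\mathrm{Binomial}(k,1/2)$ is concentrated near $k/2 + \Theta(\sqrt{k})$, which is not enough on its own — so the real content is an exact evaluation showing $q$ is small enough for every $k \ge 2$, likely by bounding $q \le c \cdot k^{-1/2} \cdot (\text{tail factor})$ and comparing to $2^{-k}$, which fails for large $k$; hence the actual argument must be more delicate, perhaps using $\ell=2$ but a cleverer extraction, or using that $\alpha_k$'s upper bound has the $-\tfrac{\ln 2}{2}$ slack.

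\medskip

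\noindent\textbf{Anticipated main obstacle.} The crux is step (iii) combined with the arithmetic of step (i): proving that the max-of-three rule reduces the fraction of mixed $2$-clauses enough that the induced biased random $2$-SAT instance remains satisfiable at density $\alpha_k + \eps$. This requires (a) an explicit satisfiability threshold for biased random $2$-SAT with given sign-type probabilities — most cleanly via analyzing bicycles/contradictory cycles in the implication digraph, generalizing Chv\'atal--Reed — and (b) a tight enough estimate of the max-of-three-binomials sign statistics to beat the upper bound $\alpha_k < 2^k\ln 2 - \tfrac{\ln 2}{2}$ uniformly in $k\ge 2$. I expect (b) to be the delicate part, since naive concentration bounds on $\max$ of three binomials give a mixed-fraction $q = \Theta(1/\sqrt k)$, which is far larger than $2^{-k}$; so the proof must exploit either the precise constant in front of $2^k\ln 2$, a smarter choice of which two literals to keep, or the fact that it is the \emph{conditional} (post-choice) density that matters — and reconciling these is where the real work lies.
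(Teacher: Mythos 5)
Your overall framework (pick the most-positive of the three clauses, project to a biased random $2$-SAT instance, and invoke a satisfiability criterion for biased $2$-SAT) is indeed the paper's route, but your proposal has a genuine gap at exactly the point you flag at the end, and the paper's resolution is the idea you are missing. You extract a \emph{uniformly random} pair of literals from the winning clause; with that extraction, since the maximum of three $\mathrm{Binomial}(k,1/2)$ counts is only $k/2+\Theta(\sqrt k)$, the mixed fraction $p_{+-}$ stays near $1/2$ and the all-negative fraction stays $\Theta(1)$, so the induced $2$-SAT threshold is $1+o(1)$ and cannot come close to $\alpha_k\sim 2^k\ln 2$ -- your own arithmetic correctly detects this failure but does not repair it. The paper instead keeps the \emph{two most positive} literals of the chosen clause. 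Then a $2$-clause with a negative literal appears only when the winning $k$-clause has at most one positive literal, i.e.\ only when \emph{all three} random clauses have at most one positive literal; this gives $p_0=2^{-kt}$ and $p_1=\bigl(2^{-k}(k+1)\bigr)^t-p_0$, both exponentially small in $kt$, which is what produces a threshold of order $2^{kt/2}$ and lets $t=3$ beat $2^k\ln 2$ for $k\ge 4$ (and beat $4.4898$ for $k=3$).

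The second gap is your proposed criterion in step (iii): the threshold for biased $2$-SAT is \emph{not} $1/q$ with $q$ the mixed fraction, and the claim that with no mixed clauses the all-true assignment works is false (all-negative clauses are violated by it). Pure-positive and pure-negative clauses combine via alternating implication paths to form contradictory cycles, so the correct criterion -- which the paper takes from Mossel and Sen, matching the two-type branching-process/eigenvalue heuristic -- is $\alpha^*=1/\bigl(p_1+2\sqrt{p_0p_2}\bigr)$. The cross term $2\sqrt{p_0p_2}$ is not a technicality: in the paper's regime it \emph{dominates} $p_1$, and it is what caps the gain at $\alpha^*\approx 2^{kt/2}/2$ rather than the larger value $1/p_1$ your formula would suggest. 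So to complete the proof along your lines you need both fixes: the most-positive-pair extraction, and the Mossel--Sen threshold (or an equivalent contradictory-cycle/branching-process analysis that accounts for pure-pure interactions), followed by the routine check that $2^{3k/2}/2$ exceeds the known upper bounds on $\alpha_k$ for every $k\ge 3$ (and the separate small computation for $k=2$).
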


\begin{proof}
As described above, our strategy is simply to take the clause $c$ with the largest number of positive literals.  We then generate a $2$-SAT formula by taking two of the most positive literals from each clause.  Specifically, if $c$ has two or more positive literals, we form a $2$-SAT clause by choosing uniformly from all such pairs; if $c$ has exactly one positive literal, we take it and choose uniformly from the $k-1$ others; and if all of $c$'s literals are negative, we choose uniformly from all ${k \choose 2}$ pairs.  

If $c$ is the most-positive of $t$ uniformly random clauses, then the probabilities that the resulting $2$-SAT clause has 0, 1, or 2 positive literals are
\begin{align}
p_0 &= 2^{-kt} \nonumber \\
p_1 &= \left( 2^{-k} (k+1) \right)^{t} - p_0 \nonumber \\
p_2 &= 1-p_0-p_1 \, . 
\label{eq:biased2}
\end{align}
If there are $m=\alpha n$ clauses, this gives a biased random $2$-SAT formula with, in expectation, $\alpha p_0 n$, $\alpha p_1 n$, and $\alpha p_2 n$ clauses of these three types.  Note that the variables appearing in each clause are independent and uniformly random.  

Recall that a $2$-SAT formula on $n$ is equivalent to a directed graph on $2n$ vertices, corresponding to the literals $x_i$ and $\ox_i$ for each $1 \le i \le n$.  Each clause $(x_i \vee x_j)$ is equivalent to a pair of edges, namely the implications $\ox_i \to x_j$ and $\ox_j \to x_i$.  The formula is satisfiable if and only if no contradictory cycle exists, leading from $x_i$ to $\ox_i$ and back to $x_i$ for some $i$.  

\emph{Unit clause propagation} is the process of satisfying a unit clause, i.e.\ a clause consisting of a single literal, and generating the unit clauses implied by it and whatever 2-clauses that variable appears in.  For instance, if $(\ox_i \vee x_j)$ is one of the 2-clauses in the formula, satisfying the unit clause $(x_i)$ will generate the unit clause $(x_j)$.  In a random formula with $\alpha p_0 n$ variables, a positive unit clause $(x_i)$ will give rise, on average, to $2 \alpha p_0$ negative unit clauses $(\ox_j)$.  Similarly, a positive unit clause will give rise, on average to $p_1$ negative ones, and so on.  Unit clause propagation is thus described by a two-type branching process, with a matrix $\alpha M$ where
\begin{equation}
\label{eq:m2}
M = \begin{pmatrix}
 p_1 & 2 p_0 \\
 2 p_2 & p_1 
\end{pmatrix} \, ,
\end{equation}
where we treat the number of negative and positive unit clauses in the current generation as a column vector and multiply by $M$ on the left.  

Given an initial unit clause $u = \begin{pmatrix} 1 \\ 0 \end{pmatrix}$ or $\begin{pmatrix} 0 \\ 1 \end{pmatrix}$, the expected population generated by the entire process is 
\[
\left( \id + \alpha M + (\alpha M)^2 + \cdots \right) \cdot u \, . 
\]
If $\alpha \lambda < 1$ where $\lambda$ is the largest eigenvalue of $M$, this series converges to $( \id - \alpha M )^{-1} \cdot u$, so in expectation just $O(1)$ unit clauses are implied by the initial one.  Intuitively, this makes it very unlikely that a contradictory loop of implications exists, and therefore suggests that the $2$-SAT formula is satisfiable with high probability.  

Indeed, this was proved by Mossel and Sen~\cite{MosselSen}.  They showed that the critical density for random $2$-SAT formulas of this kind is exactly 
\[
\alpha^* = \frac{1}{\lambda} = \frac{1}{p_1 + 2 \sqrt{p_0 p_2}} \, .
\]   
For the unbiased case $p_1 = 1/2$ and $p_0=p_2=1/4$, this reproduces the $2$-SAT threshold $\alpha_2 = 1$.  Putting in our expressions~\eqref{eq:biased2} for $p_0$, $p_1$, and $p_2$ gives
\[
\alpha^* = \frac{2^{kt/2}}{2^{-kt/2} ((k+1)^t - 1) + 2 \sqrt{1-(2^{-k}(k+1))^t}} 
\] 
For large $k$, $\alpha^*$ grows as $2^{kt/2}/2$.  If we set $t=3$, then $\alpha^*$ exceeds the $k$-SAT threshold for all $k \ge 3$.  In particular, for $k=3$ we have $\alpha^* > 4.86$, which exceeds the best known upper bound on $\alpha_3$ of $4.4898$~\cite{DiazSat09}.  For $k \ge 4$, $\alpha^*$ exceeds the first moment upper bound $2^k \ln 2$.
\end{proof}

Note that we have shown not just that three choices are enough to generate satisfiable formulas above the satisfiability threshold, but that these formulas can be satisfied in polynomial time: just use the polynomial-time algorithm for 2-SAT to find a satisfying assignment.  For the case $k=2$ and $t=2$, we have also shown that two choices raise the 2-SAT threshold to $1.203$, which improves the results of~\cite{SinclairV10,perkins}.

Note also that setting $t=1$ in the proof of Theorem~\ref{thm:3suffices} shows that the threshold for random $k$-SAT without any choices grows as $\alpha_k = \Omega(2^{k/2})$.  This is far below the second moment lower bound $\Omega(2^k)$~\cite{ach-moore-ksat,AchlioptasP03}, but the proof is much simpler.

\section{Two Choices Suffice to Raise the Threshold for $3 \le k \le 25$}

In this section we show that two choices suffice for $k$ up to $25$.  We do this by analyzing simple linear-time algorithms with differential equations.  Regrettably, these equations seem too complicated to solve analytically; thus we are not able to prove that these results hold for all $k \ge 3$, though we conjecture that they do.  

We start by showing that a particularly simple algorithm works for $5 \le k \le 25$.  We then use slightly more sophisticated algorithms to raise the threshold for $k=3$ and $k=4$.

\begin{theorem}
\label{thm:k5to25}
Two choices suffice to increase the $k$-SAT threshold for all $5 \le k \le 25$.
\end{theorem}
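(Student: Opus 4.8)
The plan is to run the recipe of Section~2 with $t=2$ choices and $\ell=3$. The (nonadaptive) choice rule is as before: of the two uniformly random $k$-clauses offered at each step, keep the one with more positive literals, breaking ties arbitrarily, and from the kept clause $c$ extract a $3$-clause consisting of three of its most positive literals, chosen uniformly among all such triples; if this biased $3$-SAT formula is satisfiable then so is the $k$-SAT formula we actually built. Let $P$ be the number of positive literals in $c$. Since the number of positive literals in a single uniformly random $k$-clause is $\mathrm{Bin}(k,1/2)$ and $c$ is the more positive of two independent such clauses, $\Pr[P \le j] = \left(2^{-k}\sum_{i=0}^{j}\binom{k}{i}\right)^{2}$, and the extracted $3$-clause has exactly $\min(P,3)$ positive literals. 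Writing $q_i$ for the probability it has $i$ positive literals, this gives, in parallel with~\eqref{eq:biased2}, $q_0 = 2^{-2k}$, $q_0+q_1 = \left((k+1)2^{-k}\right)^{2}$, $q_0+q_1+q_2 = \left((1+k+\binom{k}{2})2^{-k}\right)^{2}$, and $q_3 = 1-q_0-q_1-q_2$. At density $\alpha$ we therefore obtain a biased random $3$-SAT formula with, in expectation, $\alpha q_i n$ clauses having $i$ positive literals, the variables in each clause being independent and uniformly random.

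First I would fix a simple linear-time algorithm for this biased $3$-SAT distribution, of short-clause type: while a unit clause is present, satisfy it and simplify; otherwise take a ``free step'' that assigns a still-unset variable --- a uniformly random one, set to true, which is the natural choice given the heavy positive bias. I would then analyze this algorithm by the differential equation method, following the scaled expected numbers of surviving $3$-clauses with $i$ positive literals and of surviving $2$-clauses with $j$ positive literals as the algorithm proceeds. The algorithm outputs a satisfying assignment with high probability as long as, over the whole run, (i) no empty clause is ever created, and (ii) the $2$-clause subformula stays subcritical --- the branching-process eigenvalue governing unit-clause propagation, as in~\eqref{eq:m2} but computed from the current evolving $2$-clause sign densities, must stay below $1$, so that no contradictory cycle of implications closes. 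Both conditions can be read off the solution of the ODE system, and the supremum of densities $\alpha$ for which they hold along the entire trajectory is a lower bound on the satisfiability threshold of this Achlioptas process.

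The difficulty --- as the authors flag --- is that the ODE system has no apparent closed form, so the comparison against $\ahigh_k$ must be carried out numerically: for each $k \in \{5,\ldots,25\}$ one integrates the system, extracts the critical density $\alpha^*(k)$, and checks $\alpha^*(k) > \ahigh_k$, using the first-moment bound $2^k\ln 2$ for the larger $k$ and the sharper known upper bounds for the smaller ones. I expect the real work to be at the two ends of the range: making the free-step rule (or a lightly tuned variant) strong enough that $\alpha^*(k)$ still beats $2^k\ln 2$ near $k=25$, and squeezing enough out of the choice $\ell=3$ (rather than $\ell=2$) that the $3$-SAT comparison already succeeds at $k=5$; the remaining cases $k=3,4$ genuinely need the more refined algorithms developed in the following theorems. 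A secondary, routine chore is verifying the hypotheses of the differential equation theorem --- boundedness, Lipschitz trend functions, concentration of the one-step changes --- for this multi-type random process, which is standard for short-clause algorithms on random formulas.
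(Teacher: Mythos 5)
Your proposal matches the paper's proof: the same nonadaptive ``most positive literals'' rule with $t=2$, the same reduction to a sign-biased random $3$-SAT formula with exactly the probabilities in~\eqref{eq:biased3}, the same unit-clause-style algorithm whose free steps set a uniformly random unset variable true (the paper's Biased Unit Clause), and the same analysis via Wormald-type differential equations plus subcriticality of the two-type unit-clause branching process, verified numerically for each $5 \le k \le 25$ against the first-moment bound $2^k \ln 2$. The only cosmetic difference is that the paper needs no sharper upper bounds or tuned free-step rule at the ends of the range; plain BUC already beats $2^k \ln 2$ for all $5 \le k \le 25$ (Table~\ref{tab:buc}).
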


\begin{proof}
Our strategy is the same as before: given a choice of $t$ clauses, take the one with the most positive literals.  We then form a $3$-SAT clause by choosing uniformly from among the most-positive triplets of literals.  Analogous to~\eqref{eq:biased2}, the probability that the resulting clause has $0$, $1$, $2$, or $3$ positive literals is 
\begin{align}
p_0 &= 2^{-kt} \nonumber \\
p_1 &= \left( 2^{-k} (k+1) \right)^{t} - p_0 \nonumber \\
p_2 &= \left( 2^{-k} \left( {k \choose 2} + k + 1 \right) \right)^{\!t} - p_1 \nonumber \\
p_3 &= 1-p_0-p_1-p_2 \, . 
\label{eq:biased3}
\end{align}
Now consider the following algorithm, which we call BUC for Biased Unit Clause.  At each step it sets some variable $x$ permanently, removing clauses that agree with that setting and hence are satisfied, and shortening clauses that disagree with it.
\begin{enumerate}
\item (Forced step) If there are any unit clauses, choose one uniformly and satisfy it.
\item (Free step) Otherwise, choose $x$ uniformly from all unset variables, and set $x$ true.
\end{enumerate}
This is identical to the UC algorithm for random $k$-SAT studied by Chao and Franco~\cite{chao-franco,chao-franco2} except that, on a free step, UC flips a coin to determine the truth value of $x$.  If at any point we have two contradictory unit clauses, we simply give up rather than backtracking.  Our goal is to use differential equations to show that BUC succeeds with positive probability.  The existence of a nonuniform threshold~\cite{friedgut}, which we claim applies to these biased $3$-SAT formulas as well, then implies that they are satisfiable with high probability.

After $T$ of the variables have been set, let $S_{ij}(T)$ denote the number of $i$-clauses with $j$ positive literals, for $i=2, 3$ and $0 \le j \le i$.  Initially we have $S_{3,j}(0) = \alpha p_j n$ and $S_{2,j}(0) = 0$.  Let $q_0(T)$ and $q_1(T)$ denote the probability that the variable on the $T$th step is set false or true respectively.  Then the expected change in $S_{ij}$ at each step is
\begin{align*}
\mbox{for all $0 \le j \le 3$} \, , \quad \Exp{\Delta S_{3,j}} &= - \frac{3 S_{3,j}}{n-T} + o(1) \\
\mbox{for all $0 \le j \le 2$} \, , \quad \Exp{\Delta S_{2,j}} &= \frac{(3-j) q_1 S_{3,j} + (j+1) q_0 S_{3,j+1} - 2 S_{2,j}}{n-T} + o(1) \, . 
\end{align*}
The key fact behind these equations is that, at all times throughout the algorithm's progress, the formula consisting of the remaining clauses is uniformly random once we condition on the number of clauses of each type.  In particular, the variables appearing in each clause are uniformly random among the $n-T$ unset variables, as is the variable $x$ set on a given step.  Thus each 3-clause is either satisfied or shortened with probability $3/(n-T)$; if it has $j$ positive literals and we set $x$ false, then with probability $j/(n-T)$ it becomes a 2-clause with $j-1$ positive literals; and so on.

Rescaling to real-valued variables $t = T/n$ and $s_{ij}(t) = S_{ij}(tn)/n$ in the usual way gives the differential equations
\begin{align}
\mbox{for all $0 \le j \le 3$} \, , \quad \frac{\ds_{3,j}}{\dt} &= - \frac{3 s_{3,j}}{1-t} \label{eq:diffeq-buc-s3} \\ 
\mbox{for all $0 \le j \le 2$} \, , \quad \frac{\ds_{2,j}}{\dt} &= \frac{(k-j) q_1 s_{3,j} + (j+1) q_0 s_{3,j+1} - 2 s_{2,j}}{1-t} \, , 
\label{eq:diffeq-buc-s2} 
\end{align}
with the initial conditions $s_{3,j}(0) = \alpha p_j$ and $s_{2,j}(0) = 0$.  Then classic results~\cite{Wormald95} show that, with high probability, $S_{ij}(T) = s_{ij}(T/n) n + o(n)$ for all $T$, where $s_{ij}(t)$ is the unique solution to this system of differential equations.

The caveat to this, of course, is that a contradictory pair of unit clauses does not appear.  Standard arguments show that as long as the branching process of unit clauses stays subcritical throughout the algorithm, then the probability that no contradiction occurs, and that the algorithm succeeds in satisfying all the clauses, is $\Theta(1)$.  

Analogous to~\eqref{eq:m2}, the unit clauses obey a two-type branching process between negative and positive unit clauses, where the expected number of children of each type is within $o(1)$ of the matrix
\begin{equation}
\label{eq:buc-m}
M 
= \frac{1}{1-t} \begin{pmatrix} s_{2,1} & 2 s_{2,0} \\ 2 s_{2,2} & s_{2,1}  \end{pmatrix} \, . 
\end{equation}
We can group steps together into rounds, where each round consists of a free step followed by a cascade of forced steps.  Let $\lambda$ denote the largest eigenvector of $M$.  As long as $\lambda < 1$, the branching process is subcritical, and the total expected number $b_0, b_1$ of variables set false or true respectively in a round is
\[
\begin{pmatrix} b_0 \\ b_1 \end{pmatrix} 
= \left( \id + M + M^2 + \cdots \right) \cdot \begin{pmatrix} 0 \\ 1 \end{pmatrix} 
= ( \id - M )^{-1} \cdot \begin{pmatrix} 0 \\ 1 \end{pmatrix} \, , 
\]
where we use the fact that the initial free step in each round sets a variable true.  Averaging over many steps, but not so many that $M$ changes appreciably, the probability that a variable is set false or true is 
\[
q_0 = \frac{b_0}{b_0 + b_1} \, , \; 
q_1 = \frac{b_1}{b_0 + b_1} \, .
\]
Similar analyses of multi-type branching processes in algorithms appear in~\cite{ach-moore-3col,Kalapala}.

\begin{table}
\center
$
\begin{array}{c|cccccccc}
k & 3 & 4 & 5 & 6 & 7 & 8 & 9 & 10 \\
\abuc & 4.232 & 9.491 & 24.306 & 66.811 & 190.806 & 554.106 & 1610.88 & 4637.05 \\ 
2^k \ln 2 & & & 22.181 & 44.362 & 88.723 & 177.446 & 354.891 & 709.783 
\end{array}
$
\caption{The lower bound $\abuc$ achieved by choosing the clause with the most positive literals, and running the Biased Unit Clause algorithm on the $3$-SAT formula consisting of one of the the most-positive triplets of each clause.  For $5 \le k \le 25$, $\abuc$ exceeds the first-moment upper bound on $\alpha_k$, showing that two choices are enough to raise the threshold.}
\label{tab:buc}
\end{table}

The differential equation~\eqref{eq:diffeq-buc-s3} for $s_{3,j}$ is easy to solve: namely, $s_{3,j} = \alpha p_j (1-t)^3$.  We integrate the rest of the system~\eqref{eq:diffeq-buc-s2} numerically, and use binary search to find the largest $\alpha$, up to some precision, such that $\max_t \lambda(t) < 1$.  In Table~\ref{tab:buc} we show the resulting lower bound $\abuc$ for the first few values of $k$.  For $k=3$ and $k=4$, $\abuc$ is below the conjectured values of the threshold~\cite{mertens:mezard:zecchina:03}, namely $4.267$ and $9.931$.  But for $5 \le k \le 25$, $\abuc$ exceeds the first moment upper bound $2^k \ln 2$.
\end{proof}

Asymptotically, $\abuc$ seems to grow roughly as $2.5^k$.  It is tempting to think that we can prove a lower bound on $\abuc$ sufficient to show that two choices suffice for all $k > 25$ as well, but we have not been able to do that.  

The next two theorems use slight improvements to Theorem~\ref{thm:k5to25} to raise the threshold for $k=3$ and $k=4$.

\begin{theorem}
\label{thm:k4}
Two choices suffice to increase the $4$-SAT threshold.
\end{theorem}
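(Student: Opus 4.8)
The plan is to keep the reduction of Theorem~\ref{thm:k5to25} intact and change only the algorithm run on the biased $3$-SAT instance. As before, with $t=2$ choices we take the clause with more positive literals and keep from each surviving $4$-clause a uniformly random most-positive triplet, so that the clause-type densities $\alpha p_0,\dots,\alpha p_3$ are given by~\eqref{eq:biased3} with $t=2$ and $k=4$. Instead of BUC, however, we run a short-clause (GUC-style) variant, which we call BSC for Biased Short Clause: on a forced step it still satisfies a uniformly random unit clause, but on a free step, rather than setting a uniformly random variable true, it picks a uniformly random shortest remaining clause --- a $2$-clause if any remain, otherwise a $3$-clause --- and satisfies one of its literals, choosing a positive literal when the clause has one. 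The purpose of this change is exactly as in ordinary random $k$-SAT: resolving $2$-clauses on free steps drains $s_{2,0},s_{2,1},s_{2,2}$ faster, and since these are the only quantities entering the unit-clause branching matrix~\eqref{eq:buc-m}, it lets $\alpha$ grow further before the top eigenvalue $\lambda(t)$ reaches $1$.

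The analysis then mirrors the proof of Theorem~\ref{thm:k5to25}. The differential equations for $s_{3,j}$ are unchanged, so again $s_{3,j}=\alpha p_j(1-t)^3$. The equations for $s_{2,j}$ pick up two modifications coming from the new free step: an extra loss term of order $-\,(\text{free-step rate})\cdot s_{2,j}/\sum_{j'} s_{2,j'}$ accounting for the random $2$-clause we delete outright, and a slight adjustment to the shortening terms since a free step now sets a variable false on the rare occasions that the chosen short clause has no positive literal. The round/branching-process device is reused verbatim: with $M$ as in~\eqref{eq:buc-m}, the expected numbers $(b_0,b_1)$ of variables set false and true per round are obtained by applying $(\id-M)^{-1}$ to the seed vector of the round, and $q_0=b_0/(b_0+b_1)$, $q_1=b_1/(b_0+b_1)$; the only change is that the seed is now the variable set to satisfy the chosen short clause rather than a uniformly random true variable, which perturbs the seed vector but not the structure of the argument. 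One integrates the resulting closed system numerically and uses binary search to find the largest $\alpha=\absc$ with $\max_t\lambda(t)<1$; the theorem follows once one checks numerically that $\absc$ exceeds the best known rigorous upper bound on $\alpha_4$.

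For rigor, Wormald's theorem~\cite{Wormald95} again gives $S_{ij}(T)=s_{ij}(T/n)n+o(n)$ whp, subject to the usual Lipschitz and boundedness conditions on the trend functions; the one new wrinkle relative to Theorem~\ref{thm:k5to25} is that the short-clause free step becomes ill-behaved once the number of $2$-clauses is no longer $\Theta(n)$, so one should run the differential-equation analysis only up to a stopping time shortly before all variables are set and then dispose of the short remaining phase by a standard argument, using that the residual formula is sparse and its unit-clause branching process stays subcritical. Together with the standard fact that a throughout-subcritical unit-clause branching process gives success probability $\Theta(1)$, and with the non-uniform threshold of~\cite{friedgut} --- which upgrades ``satisfiable with probability $\Theta(1)$'' to ``satisfiable whp'' --- this proves the theorem. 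The main obstacle is the margin: BUC reaches only $9.491$ for $k=4$, below even the conjectured $\alpha_4\approx 9.931$, so everything rests on the numerics showing that the short-clause improvement is large enough to clear the rigorous upper bound on $\alpha_4$ --- a genuinely tight-looking window --- with the ill-conditioned endgame of the free-step rule as a secondary technical nuisance.
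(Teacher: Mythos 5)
There is a genuine gap here, and it is exactly the one you flag yourself at the end. Your plan keeps the truncation to a biased $3$-SAT formula (via~\eqref{eq:biased3} with $k=4$, $t=2$) and tries to buy the missing density by upgrading BUC to a short-clause rule, but the theorem then rests entirely on an unperformed numerical computation whose outcome is genuinely in doubt. The paper's Table~\ref{tab:buc} shows that BUC on the truncated formula reaches only $\abuc = 9.491$, while the target is the Dubois--Boufkhad upper bound $\alpha_4 \le 10.217$~\cite{dubois-boufkhad}, i.e.\ you need the short-clause modification to gain about $8\%$. The only data point available for the size of that gain is $k=3$, where BSC improves BUC from $4.232$ to $4.581$, about $8.2\%$ --- so your window is not merely ``tight-looking,'' it is right at the edge, and nothing in your argument establishes that $\absc$ actually clears $10.217$. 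A proof by differential-equation analysis is only a proof once the numerics are exhibited; as written, the decisive inequality is asserted as a hope.

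The paper takes a different and more robust route: it does not truncate at all, but runs BUC directly on the biased $4$-SAT formula (with $p_0=\tfrac{1}{256}, p_1=\tfrac{3}{32}, p_2=\tfrac{3}{8}, p_3=\tfrac{13}{32}, p_4=\tfrac{31}{256}$), adding the equations~\eqref{eq:diffeq-buc-s4} for $4$- and $3$-clauses while keeping~\eqref{eq:diffeq-buc-s2} and the branching matrix~\eqref{eq:buc-m} unchanged. Keeping the fourth literal is worth more than your algorithmic upgrade: it raises the achievable density from $9.491$ to $10.709$, comfortably above $10.217$. This comparison is itself evidence against your plan --- discarding a literal per clause costs roughly $13\%$ under BUC, more than the short-clause rule is observed to recover --- so even if you carried out the integration, there is a real risk the number comes out below the bound. (Your free-step rule also acts on $3$-clauses when no $2$-clauses remain, which would add a removal term to the $s_{3,j}$ equations; that is a minor issue since there are $\Theta(n)$ $2$-clauses in the critical phase, as in the paper's Theorem~\ref{thm:k3}, but it should be said.) To repair your write-up along the paper's lines, either drop the truncation and run BUC (or BSC) on the full $4$-SAT formula, or actually integrate your system and report a value of $\absc$ exceeding $10.217$; without one of these the theorem is not established.
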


\begin{proof}
Given two clauses, we again take the one with more positive clauses, but now we apply the BUC algorithm directly to the resulting $4$-SAT formula.  Most of the analysis of Theorem~\ref{thm:k5to25} goes through unchanged, except that the probability that a clause has a given number of positive literals is now 
\[
p_0 = \frac{1}{256} \, , \;
p_1 = \frac{3}{32} \, , \;
p_2 = \frac{3}{8} \, , \;
p_3 = \frac{13}{32} \, , \;
p_4 = \frac{31}{256} \, .
\]
The differential equations~\eqref{eq:diffeq-buc-s2} for the density of $2$-clauses and the matrix $M$ for the branching process of unit clauses~\eqref{eq:buc-m} are unchanged.  The differential equations for $4$- and $3$-clauses are now
\begin{align}
\mbox{for all $0 \le j \le 4$} \, , \quad \frac{\ds_{4,j}}{\dt} &= - \frac{4 s_{4,j}}{1-t} \nonumber \\ 
\mbox{for all $0 \le j \le 3$} \, , \quad \frac{\ds_{3,j}}{\dt} &= \frac{(4-j) q_1 s_{4,j} + (j+1) q_0 s_{4,j+1} - 3 s_{3,j}}{1-t} \, ,
\label{eq:diffeq-buc-s4} 
\end{align}
and the initial conditions are $s_{4,j}(0) = \alpha p_j$ and $s_{3,j}(0) = s_{2,j}(0) = 0$. 

Integrating this system numerically, we find that $M$'s largest eigenvalue $\lambda$ is less than $1$ up to $\alpha = 10.709$.  This is less than the naive first moment upper bound on $\alpha_4$, but it exceeds an improved upper bound of $10.217$ based on counting locally maximal assignments~\cite{dubois-boufkhad}.
\end{proof}

Finally, we use a biased version of the Short Clause (SC) algorithm, which Chvatal and Reed used to prove a lower bound on the $3$-SAT threshold~\cite{ChvatalR92}, to show that two choices can delay the satisfiability transition in $3$-SAT.

\begin{theorem}
\label{thm:k3}
Two choices suffice to increase the $3$-SAT threshold.
\end{theorem}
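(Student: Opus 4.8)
The plan is to repeat the analysis of Theorem~\ref{thm:k5to25}, but replace the BUC algorithm with a biased version of the Short Clause algorithm, which prioritizes $2$-clauses over free steps and is known to give better bounds for $3$-SAT than UC. As always, given two clauses we keep the one with more positive literals; by~\eqref{eq:biased3} with $k=3$ and $t=2$ the resulting biased $3$-SAT formula has clause-type probabilities
\[
p_0 = \frac{1}{64}, \qquad p_1 = \frac{15}{64}, \qquad p_2 = \frac{34}{64}, \qquad p_3 = \frac{14}{64} \, .
\]
We then run the following algorithm, BSC (Biased Short Clause). At each step: (i) if there is a unit clause, satisfy one chosen uniformly (a forced step); (ii) otherwise, if there is a $2$-clause, choose one uniformly and satisfy one of its literals, preferring a positive literal when the clause has one (a short step); (iii) otherwise, choose an unset variable uniformly at random and set it true (a free step). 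Exactly as in the earlier proofs, the formula consisting of the remaining clauses stays uniformly random conditioned on the counts $S_{ij}$ of surviving $i$-clauses with $j$ positive literals, so we may compute the one-step expectations $\Exp{\Delta S_{ij}}$ and rescale to differential equations for $s_{ij}(t) = S_{ij}(tn)/n$.

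The equations for the $3$-clauses are unchanged, giving $s_{3,j}(t) = \alpha p_j (1-t)^3$. The equations~\eqref{eq:diffeq-buc-s2} for the $2$-clauses pick up a single new term, since on every short step a uniformly random $2$-clause is consumed: $\ds_{2,j}/\dt$ gains a term $-\rho(t)\, s_{2,j}/s_2$, where $s_2 = \sum_j s_{2,j}$ and $\rho(t)$ is the fraction of steps that are short steps. As in Theorem~\ref{thm:k5to25}, the unit clauses obey a two-type branching process whose matrix $M(t)$ has the form~\eqref{eq:buc-m}, and by grouping steps into rounds --- each a free or short step followed by the cascade of forced steps it triggers, or, in the regime where the $2$-clause pool is macroscopic and free steps essentially cease, by tracking the flow into and out of the pool directly --- the matrix $(\id - M)^{-1}$ yields the expected numbers of variables set false or true, the probabilities $q_0, q_1$, and the rate $\rho(t)$. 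The difference from BUC is that $\rho(t)$, $q_0$ and $q_1$ now all depend on the densities $s_{2,j}$ and on the signs of the literals satisfied on short steps, so the system is coupled in a genuinely new way. As before, standard arguments show that as long as $\lambda(t) < 1$ throughout the run, where $\lambda(t)$ is the largest eigenvalue of $M(t)$, no contradictory pair of unit clauses appears and BSC satisfies all the clauses with probability $\Theta(1)$; the non-uniform threshold~\cite{friedgut} then upgrades this to satisfiability with high probability, and since satisfying the biased $3$-SAT formula satisfies the original $k$-SAT formula, the theorem follows.

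Finally I would integrate the system numerically and binary-search for the largest density $\absc$, up to some precision, for which $\max_t \lambda(t) < 1$, and check that it exceeds the best known upper bound on $\alpha_3$ of $4.4898$~\cite{DiazSat09} (equivalently, exceeds $\alpha_3$ under the threshold conjecture). The main obstacle is precisely this new feedback loop: in BUC the $2$-clause pool merely accumulates, whereas here the rate $\rho(t)$ at which $2$-clauses are removed, together with which literals are satisfied on short steps, feeds back into the $s_{2,j}$ equations, hence into $M(t)$, and hence into $q_0$, $q_1$ and $\rho$ itself. Writing this coupled system down correctly --- in particular, handling the regime in which the $2$-clause pool is $\Theta(n)$ so that free steps essentially stop --- and then verifying numerically that the branching process stays subcritical up to the claimed density, is where the real work lies.
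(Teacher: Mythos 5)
Your proposal follows the paper's proof of this theorem essentially step for step: the same selection rule (keep the more positive clause), the same Biased Short Clause algorithm that satisfies a uniformly random 2-clause on each free step preferring a positive literal, the same extra removal term $-\pfree\, s_{2,j}/(s_{2,0}+s_{2,1}+s_{2,2})$ in the 2-clause equations, the same two-type branching process with matrix of the form~\eqref{eq:buc-m} whose round structure (with the initial unit clause being negative with probability proportional to $s_{2,0}$ and positive otherwise) determines $q_0$, $q_1$ and the free-step rate, and the same plan of numerical integration plus subcriticality of $\lambda(t)$; the paper's computation yields subcriticality up to $\alpha = 4.581 > 4.4898$. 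The one concrete error is your clause-type distribution: the correct values are $p_2 = 33/64$ and $p_3 = 15/64$, not $34/64$ and $14/64$. The number of positive literals in a single uniformly random 3-clause is Binomial$(3,1/2)$, so for the maximum over $t=2$ clauses, $\Pr[\max \le 1] = (1/2)^2$ and $\Pr[\max \le 2] = (7/8)^2$, giving $p_2 = 49/64 - 16/64 = 33/64$ and $p_3 = 15/64$; your values come from reading the displayed formula~\eqref{eq:biased3} literally, whose line for $p_2$ is missing a $-p_0$ (it should subtract the full lower tail $p_0 + p_1$). Since the entire theorem ultimately rests on a numerical verification whose margin is only about two percent ($4.581$ versus $4.4898$), this shift of $1/64$ of the mass from $p_3$ to $p_2$ must be corrected before the integration is run; with the corrected values the computation goes through as in the paper, so your argument is otherwise sound modulo actually performing that numerical check.
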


\begin{proof}
Once again our strategy is to take the more positive of the two clauses.  The probability that a clause has a given number of positive literals is
\[
p_0 = \frac{1}{64} \, , \;
p_1 = \frac{15}{64} \, , \;
p_2 = \frac{33}{64} \, , \;
p_3 = \frac{15}{64} \, . 
\]
We now analyze the following algorithm, which we call Biased Short Clause (BSC).
\begin{enumerate}
\item (Forced step) If there are any unit clauses, choose one uniformly and satisfy it.
\item (Free step) Otherwise, if there are any 2-clauses, choose one uniformly.  If it has any positive literals, choose one uniformly and satisfy it.  If both its literals are negative, choose one uniformly and satisfy it.
\item (Really free step) If there are no unit clauses or 2-clauses, choose $x$ uniformly from the unset variables and choose $x$'s truth value uniformly.
\end{enumerate}
This is identical to Short Clause~\cite{ChvatalR92} except that, whenever possible, we satisfy the chosen 2-clause by setting a variable true.

During the critical phase of the algorithm, there are $\Theta(n)$ 2-clauses, so we can effectively ignore the possibility of a really free step.  Let $\pfree$ denote the probability that a given step is free.  The differential equations for 3- and 2-clauses are then
\begin{align}
\mbox{for all $0 \le j \le 3$} \, , \quad \frac{\ds_{3,j}}{\dt} &= - \frac{3 s_{3,j}}{1-t} \label{eq:diffeq-buc-s3} \\ 
\mbox{for all $0 \le j \le 2$} \, , \quad \frac{\ds_{2,j}}{\dt} &= \frac{(k-j) q_1 s_{3,j} + (j+1) q_0 s_{3,j+1} - 2 s_{2,j}}{1-t} 
- \pfree \frac{s_{2,j}}{s_{2,0}+s_{2,1}+s_{2,2}} \, , 
\label{eq:diffeq-bsc} 
\end{align}
where the additional term is due to the fact that we choose and satisfy a random 2-clause on every free step.  

As before, consider a round consisting of a free step followed by a cascade of forced steps, and let $b_0$ and $b_1$ denote the total expected number of variables set false or true during a round.  The probability that a given step is free is $1$ divided by the expected length of the round, 
\[
\pfree = \frac{1}{b_0+b_1} \, ,
\]
and the probability that a given step sets a variable false or true is $q_0 = b_0/(b_0+b_1)$ and $q_1 = b_1/(b_0+b_1)$ respectively.  The matrix $M$ describing the branching process of unit clauses is the same as in BUC.  However, the initial population of unit clauses in each round is different.  Rather than always setting a variable true, a free step sets a variable true if the chosen 2-clause has at least one positive literal, and otherwise it sets a variable false.  Thus
\[
\begin{pmatrix} b_0 \\ b_1 \end{pmatrix} 
= \frac{1}{s_{2,0}+s_{2,1}+s_{2,2}} \, ( \id - M )^{-1} \cdot \begin{pmatrix} s_{2,0} \\ s_{2,1}+s_{2,2} \end{pmatrix} \, .
\]

Integrating this system numerically, we find that $M$'s largest eigenvalue $\lambda$ stays below $1$ for all $t$ as long as $\alpha < 4.581$.  This exceeds the best known upper bound $\alpha_3 < 4.4898$, completing the proof.
\end{proof}

All these results show that two choices are enough to create a formula at a density above $\alpha_k$ that can be satisfied, with probability $\Theta(1)$, in linear time.

\section{Two Choices Suffice to Lower the Threshold, If There Is One}

We now show that two choices are enough to lower the satisfiability threshold if the threshold exists.  If there is no threshold, we can still lower it; we explain below what we mean by this tongue-in-cheek statement.

\begin{theorem}
\label{thm:lower2}
Two choices suffice to lower the threshold for $k$-SAT for any $k$, assuming that the threshold conjecture holds.
\end{theorem}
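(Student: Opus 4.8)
The plan is to use the two choices not to spread the formula out but to \emph{concentrate} it. Fix in advance a set $S$ of variables with $|S| = \beta n$ (rounding is immaterial) for a suitable constant $\beta < 1$, and, whenever possible, choose a clause all of whose variables lie in $S$. Concretely: given the two uniformly random clauses $c_1, c_2$, add $c_1$ if all $k$ of its variables lie in $S$; otherwise add $c_2$ if all of its variables lie in $S$; otherwise add $c_1$. This rule is nonadaptive and, like the rules of the previous section, oblivious to the signs; here it is oblivious to them in the opposite direction, depending only on which variables occur.

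The first step is to analyze the sub-formula $F|_S$ consisting of those added clauses that lie entirely inside $S$. In each round, independently of the others, the added clause lies in $S$ with probability $p_n := 1-(1-p)^2 = 2p - p^2$, where $p = \binom{|S|}{k} / \binom{n}{k} \to \beta^k$; and, conditioned on that event, the clause is uniformly distributed among the $2^k \binom{|S|}{k}$ clauses on $S$, with independent uniform signs. Hence $F|_S$ is distributed exactly as a random $k$-SAT formula $F_k(|S|, N_S)$ with $N_S \sim \mathrm{Bin}(m, p_n)$. For $m = \alpha n$, a Chernoff bound gives $N_S = (1+o(1))\, \alpha\,(2\beta^k - \beta^{2k})\, n$ with high probability, so dividing by $|S| = (1+o(1))\beta n$, the density of $F|_S$ converges in probability to $\alpha\, f(\beta)$, where $f(\beta) := 2\beta^{k-1} - \beta^{2k-1}$.

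The next step is a one-line calculus observation: $f(1) = 1$ and $f'(1) = 2(k-1) - (2k-1) = -1 < 0$, so there is a constant $\beta < 1$ with $f(\beta) = 1+\delta$ for some $\delta > 0$. Fix such a $\beta$, and then pick any $\alpha$ in the nonempty interval $\bigl(\alpha_k/(1+\delta),\, \alpha_k\bigr)$. Then, with high probability, $F|_S$ is a uniformly random $k$-SAT formula on $|S| \sim \beta n$ variables whose density exceeds $\alpha_k$; assuming the satisfiability threshold conjecture — and using that satisfiability is monotone decreasing in the clause set, to pass from the random count $N_S$ to a fixed density strictly above $\alpha_k$ — $F|_S$ is unsatisfiable with high probability. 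Since $F|_S$ is a sub-formula of the constructed formula $F$, this makes $F$ unsatisfiable with high probability at density $\alpha < \alpha_k$, which is what lowering the threshold means.

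I expect the only real subtlety — as opposed to genuine difficulty — to be the distributional bookkeeping: checking that conditioning a uniformly random clause on ``all variables in $S$'' really returns the uniform distribution on $S$-clauses with independent uniform signs (so that the threshold conjecture applies to $F|_S$ verbatim), and that the random number of such clauses concentrates tightly enough to land above $\alpha_k |S|$ with high probability. No large-deviation estimates beyond Chernoff and no differential equations are needed. I would also remark that the threshold here drops only by an amount of order $\delta$, which is all the theorem claims; and that replacing $\alpha_k$ by the first-moment upper bound $2^k \ln 2$ throughout gives, unconditionally, a strategy producing a formula that is unsatisfiable with high probability at a density below $2^k \ln 2$ — the ``tongue-in-cheek'' version promised above, since without the conjecture we cannot assert that uniformly random $k$-SAT is still satisfiable at that density.
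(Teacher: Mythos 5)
Your proposal is correct and takes essentially the same route as the paper: prefer clauses all of whose variables lie in a fixed subset of size $\beta n$, observe that the induced subformula is a uniformly random $k$-SAT formula whose density is boosted by the factor $2\beta^{k-1}-\beta^{2k-1}$ (exactly the paper's $\gamma$ with $t=2$), and apply the threshold conjecture to that subformula. The only difference is that the paper maximizes this factor over the subset size, getting the explicit bound $\gamma \ge 1+\frac{1}{4k^2}$ (reused later for Theorem~\ref{thm:lower}), while you settle for existence of a gain via $f(1)=1$, $f'(1)=-1<0$, which is enough for the statement as given.
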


\begin{proof}
Our strategy depends on the topology of the formula, but in a very simple way.  Let $0 < a < 1$ be a constant to be determined.  We simply prefer clauses whose variables are all in the set $U = \{ x_1, x_2, \dots, x_{an} \}$ to those with one or more variables outside $U$.  

If we have $t$ choices, the probability that the chosen clause has all its variables in $U$ is
\[
q = 1-(1-a^k)^t \, , 
\]
If the subformula consisting of these clauses is unsatisfiable, then so is the entire formula.  But this subformula is uniformly random in $F_k(n',m')$ where $n' = an$ and $\Exp{m'} = qm$.  By the Chernoff bound, its density is arbitrarily close to 
\begin{equation}
\label{eq:gamma}
\alpha' = \frac{m'}{n'} = \alpha \gamma 
\quad \text{where} \quad
\gamma = \frac{1-(1-a^k)^t}{a} \, . 
\end{equation}
Thus the chosen formula is unsatisfiable w.h.p.\ if $\alpha > \alpha_k / \gamma$, lowering the threshold by a factor of $\gamma$.  

To confirm that there is an $a$ such that $\gamma > 1$, we maximize $\gamma$ as a function of $a$.  Specifically, if $t=2$ then $\gamma$ is maximized at 
\[
a = \left( \frac{2k-2}{2k-1} \right)^{1/k} \, ,
\]
where
\begin{equation}
\label{eq:gammamax}
\gamma 
= \frac{4k(k-1)}{(2k-1)^2} \left( \frac{2k-1}{2k-2} \right)^{1/k} 
\ge 1 + \frac{1}{4k^2} \, . 
\end{equation}
This completes the proof.
\end{proof}

We remark that a similar strategy shows that two choices are enough to create a giant component with $m=cn$ edges where $c = (3/8) \sqrt{3/2} = 0.459$.

What if we don't take the threshold conjecture for granted?  Theorems~\ref{thm:3suffices}--\ref{thm:k3} still ``raise the threshold'' unconditionally, in the sense that two or three choices give formulas that are w.h.p.\ satisfiable at densities where random $k$-SAT formulas are w.h.p.\ unsatisfiable.  We can give an analogous result for lowering the threshold:

\begin{theorem}
\label{thm:lower}
For any $k$, there is a constant $t$ such that $t$ choices suffice to generate formulas that are w.h.p.\ unsatisfiable at densities where random $k$-SAT formulas are w.h.p.\ satisfiable.  For sufficiently large $k$, two choices suffice.
\end{theorem}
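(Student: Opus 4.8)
The plan is to recycle the subset-preference strategy from the proof of Theorem~\ref{thm:lower2}, but to draw the conclusion from the \emph{unconditional} first/second moment bounds rather than from the threshold conjecture. As there, fix a constant $0 < a < 1$, let $U = \{x_1,\dots,x_{an}\}$, and always prefer a clause all of whose variables lie in $U$. The same Chernoff argument shows that the subformula induced by the chosen clauses whose variables all lie in $U$ is distributed as $F_k(n',m')$ with $n' = an$ and with $m'/n'$ concentrated around $\alpha' = \alpha\gamma$, where $\gamma = (1-(1-a^k)^t)/a$ as in~\eqref{eq:gamma}; if this subformula is unsatisfiable, so is the whole formula. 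The only change from Theorem~\ref{thm:lower2} is that we no longer assume a threshold exists: instead we use that $F_k(n,\beta n)$ is unsatisfiable w.h.p.\ whenever $\beta > \ahigh_k$ and satisfiable w.h.p.\ whenever $\beta < \alow_k$, with $\alow_k \le \ahigh_k$ the known bounds from~\eqref{eq:ach-peres}.

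Consequently our construction yields a w.h.p.\ unsatisfiable formula as soon as $\alpha\gamma > \ahigh_k$, while $F_k(n,\alpha n)$ is itself w.h.p.\ satisfiable as long as $\alpha < \alow_k$. A density $\alpha$ doing both exists precisely when $\gamma > \rho_k := \ahigh_k/\alow_k$, and $\rho_k$ is a finite constant strictly larger than $1$. So the theorem reduces to making $\gamma$ exceed this fixed ratio with a constant number of choices.

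For the first statement, note that for any fixed $a \in (0,1)$ we have $\gamma = (1-(1-a^k)^t)/a \to 1/a$ as $t \to \infty$; hence, given $k$, choose any $a < 1/\rho_k$ and then a constant $t = t(k)$ large enough that $(1-a^k)^t$ is small enough to force $\gamma > \rho_k$. For the second statement we need the \emph{optimized} two-choice value of $\gamma$ to beat $\rho_k$ for all large $k$: by~\eqref{eq:gammamax}, the optimal $t=2$ choice of $a$ gives $\gamma \ge 1 + 1/(4k^2)$, whereas plugging the bounds~\eqref{eq:ach-peres} into the definition of $\rho_k$ gives
\[
\rho_k - 1 \;\le\; \frac{\ahigh_k - \alow_k}{\alow_k} \;=\; \frac{O(k)}{2^k \ln 2 - O(k)} \;=\; O\!\left(k\,2^{-k}\right)\,,
\]
which is $o(1/k^2)$. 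Hence $\gamma \ge 1 + 1/(4k^2) > \rho_k$ for all sufficiently large $k$, so two choices suffice there.

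The Chernoff concentration of $m'$ and the fact that unsatisfiability of the induced subformula is inherited by the whole formula are identical to Theorem~\ref{thm:lower2} and need no new ideas. The only point requiring genuine care is the quantitative comparison in the last step: one must confirm that the satisfiability window $[\alow_k,\ahigh_k]$ is narrow enough \emph{in ratio} — namely $\rho_k = 1 + O(k\,2^{-k})$ — to be dominated by the $1 + \Omega(1/k^2)$ gain available from two choices. This is where we exploit that the window's endpoints are exponentially large in $k$ while its width is only linear in $k$; I expect this bookkeeping, rather than any conceptual difficulty, to be the crux of the proof.
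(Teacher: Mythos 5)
Your proposal is correct and follows essentially the same route as the paper's own proof: both reduce the problem to making $\gamma$ exceed the ratio $\ahigh_k/\alow_k$ of the best known unconditional bounds (your $\rho_k$ is the paper's $\gamma_k$), handle general $k$ by fixing $a$ small and taking $t$ a large constant, and handle large $k$ by comparing the two-choice gain $1+\Theta(1/k^2)$ from~\eqref{eq:gammamax} with the ratio $1+O(k\,2^{-k})$ coming from~\eqref{eq:ach-peres}. No substantive differences to report.
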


\begin{proof}
Following the proof of Theorem~\ref{thm:lower2}, we just have to ensure that $\gamma > \gamma_k$ where $\gamma_k = \ahigh_k / \alow_k$ is the ratio between the best known upper and lower bounds on the threshold, i.e.\ the lowest and highest densities where random $k$-SAT formulas are known to be unsatisfiable or satisfiable respectively.  

Examining~\eqref{eq:gamma}, we see that for any $k$ and any $\gamma_k$ there are $a, t$ such that $\gamma > \gamma_k$.  For instance, let $a = 1/(2\gamma_k)$ and let $t$ be large enough so that $(1-a^k)^t < 1/2$.

For large $k$, from~\eqref{eq:ach-peres} we have $\gamma_k = 1+O(2^{-k} k)$, where $O$ represents a constant independent of $k$.  Since from~\eqref{eq:gammamax} we can achieve $\gamma = 1+\Theta(1/k^2)$ with two choices, there is some $k_0$ such that two choices suffice for all $k \ge k_0$.
\end{proof}

For $3$-SAT in particular, where the current value of $\gamma_k$ is $4.898 / 3.52 = 1.275$, maximizing $\gamma$ as a function of $a$ shows that $6$ choices suffice to lower the threshold unconditionally.

\section{Conclusion}

We have shown that three choices are enough to raise the satisfiability threshold in random $k$-SAT, and that two are enough to lower it, for any $k$.  We have also shown that two are enough to raise it for $k \le 25$.  We are left with several questions.
\begin{enumerate}
\item Are two choices enough to raise the threshold for any $k$?  This seems incontrovertible, but we not see how to extend our analysis of Biased Unit Clause to arbitrary $k$.
\item Sinclair and Vilenchik~\cite{SinclairV10} point out that if we are allowed to choose off-line, i.e.\ if we are given all pairs of clauses in advance, then with two choices can raise the $k$-SAT threshold exactly to the $2k$-SAT threshold, since a choice of two $k$-SAT clauses is equivalent to a $2k$-SAT clause.  Can we do nearly this well in the on-line version?  Or is there a stricter upper bound on how high we can raise the $k$-SAT threshold with two on-line choices, say $O(2^{ck})$ for some $c < 2$? 
\item Our two-choice strategy for lowering the threshold does so by a factor of $1+O(1/k^2)$.  Is there a strategy with two choices, or a constant number of choices, that lowers the threshold by a constant factor for all $k$?

\end{enumerate}

\paragraph*{Acknowledgments}  We are grateful to Stephan Mertens and Will Perkins for helpful conversations.  T.H. and C.M. are supported in part by NSF grant CCF-1219117.


\begin{thebibliography}{10}

\bibitem{explosive}
D. Achlioptas, R. D'Souza, and J. Spencer,
\newblock Explosive percolation in random networks.
\newblock \emph{Science} 323(5920):1453--5, 2009.

\bibitem{ach-moore-ksat}
Dimitris Achlioptas and Cristopher Moore.
\newblock The asymptotic order of the random $k$-{SAT} threshold.
\newblock In \emph{Proc. 43rd Symposium on Foundations of Computer Science},
  779--788. 2002.

\bibitem{ach-moore-3col}
D.~Achlioptas and C.~Moore,
 \newblock  Almost all graphs with average degree 4 are 3-colorable.
 \newblock   \emph{Journal Computer System  Science}, 67(2):441--471, 2003.
 
\bibitem{AchlioptasP03}
D.~Achlioptas and Y.~Peres,
\newblock The threshold for random $k$-SAT is $2^k (\ln 2 - O(k))$.
\newblock In   \emph{Proc. 35th. STOC}, pp. 223--231, 2003.

\bibitem{BohmanF01}
T.~Bohman and A.~Frieze,
\newblock Avoiding a giant component.
\newblock \emph{Random Struct. Algorithms}, 19(1): 75--85, 2001.


\bibitem{bohman-kravitz}
T.~Bohman and D.~Kravitz, 
\newblock Creating a giant component.
\newblock \emph{Combinatorics, Probability and Computing} 15:489-511, 2006.
  
\bibitem{chao-franco}
Ming-Te Chao and John~V. Franco.
\newblock Probabilistic analysis of two heuristics for the 3-satisfiability
  problem.
\newblock \emph{{SIAM} Journal on Computing}, \textbf{15}(4):1106--1118, 1986.

\bibitem{chao-franco2}
Ming-Te Chao and John~V. Franco.
\newblock Probabilistic analysis of a generalization of the unit clause literal
  selection heuristic for the $k$-satisfiability problem.
\newblock \emph{Information Science}, \textbf{51}:289--314, 1990.

\bibitem{ChvatalR92}
V.~Chvatal and B.~Reed,
\newblock Mick Gets Some (the Odds Are on His Side).
\newblock In   \emph{Proc.{33rd FOCS}}, pp620--627, 1992.

\bibitem{DiazSat09}
J.~Diaz and L.~Kirousis and D.~Mitsche and X.~Perez,
\newblock   On the satisfiability threshold of formulae with three literals per clause.
\newblock \emph{Theoretical Computer Science}, 410: 2920--2934, 2009.

\bibitem{dubois-boufkhad}
Olivier Dubois and Yacine Boufkhad.
\newblock A general upper bound for the satisfiability threshold of random {$r$-SAT} formulae.
\newblock \emph{Journal of Algorithms}, \textbf{24}(2):395--420, 1997.

\bibitem{FernandezVega01}
 W.~Fernandez~de~la~Vega,
 \newblock  Random 2-SAT: results and problems.
  \newblock  \emph{Theoretical Computer Science}, 265(1-2):131--146, 2001.

\bibitem{flaxman-gamarnik-sorkin}
A. Flaxman, D. Gamarnik, and G. Sorkin,
\newblock Embracing the giant component.
\newblock Random Structures and Algorithms, 27 (3):277--289, 2005.

\bibitem{friedgut}
Ehud Friedgut.
\newblock Sharp thresholds of graph properties, and the $k$-{SAT} problem.
\newblock \emph{Journal of the American Mathematical Society},
  \textbf{12}(4):1017--1054, 1999.
\newblock Appendix by {J}ean {B}ourgain.

\bibitem{Goerdt96}
A.~Goerdt,
 \newblock  A Threshold for Unsatisfiability.
  \newblock  \emph{Journal of Computer and System Sciences}, 469--486, 1996.
  
\bibitem{HajiaghayiS03}
M.~Taghi~Hajiaghayi and G.~Sorkin,
 \newblock  The satisfiability threshold of random 3-{SAT} is at least 3.52.
  \newblock  IBM Research Report RC22942, 2003.
  
  \bibitem{Kalapala}
  V.~Kalapala, C.~Moore,
   \newblock  The Phase Transition in Exact Cover.
  \newblock  \emph{Chicago Journal of Theoretical Computer Science}, 5, 2008.

\bibitem{Kirousis-SAT}
 A.~Kaporis and L.~Kirousis and EÉ~Lalas,
 \newblock  The probabilistic analysis of a greedy satisfiability algorithm.
  \newblock  \emph{Random Struct. Algorithms}, 28(4), 444-480, 2006.

\bibitem{mertens:mezard:zecchina:03}
Stephan Mertens, Marc M\'ezard, and Riccardo Zecchina.
\newblock Threshold values of random ${K}$-{SAT} from the cavity method.
\newblock \emph{Random Structures \& Algorithms}, \textbf{28}:340--373, 2006.

 \bibitem{Moore-Mertens}
C.~Moore and S.~Mertens,
\newblock \emph{The Nature of Computation.}
\newblock Oxford University Press, 2011.

\bibitem{MezardBook}
M.~Mezard and A.~Montanari,
\newblock \emph{Information, Physics and Computation.}
 \newblock Oxford Graduate Texts, 2009
  
\bibitem{Mezard1}
M.Mezard and R.Zecchina,
\newblock Random K-satisfiability problem: From an analytic solution to an efficient algorithm.
\newblock \emph{Physical Review E} 66, 056126, 1--26, 2002.

\bibitem{Mitchell92}
D.~Mitchell and B.~Selman and H.~Levesque,
\newblock Hard and easy distributions of SAT problems.
 \newblock \emph{Proc. 10th. National Conference on Artificial Intelligence  (AAAI)}, 459--465, 1992.
  
\bibitem{MosselSen}
E.~Mossel and A. Sen,
\newblock Branching process approach for the 2-SAT threshold.
\newblock \emph{Journal of Applied Probability}, 47(3):796--810, 2010.

\bibitem{perkins}
W.~Perkins,
\newblock Random K-SAT and the power of two choices.
\newblock arXIv:1209.5313v1. Sept. 2012

\bibitem{SinclairV10}
A.~Sinclair and D.~Vilenchik,
\newblock Delaying Satisfiability for Random 2SAT.
\newblock In M.~Serna, R.~Shaitiel, J.~Rolim,
  \emph{Proc. {13th.-RANDOM-APPROX}}, volume 6302 of \emph{Lecture Notes in
  Computer Science}, 710--723. Springer, 2010.
  
\bibitem{spencer-wormald}
J. Spencer, N. Wormald, 
\newblock Birth Control for Giants.
\newblock \emph{Combinatorica} 27, 587 (2007).

\bibitem{Verhoeven00}
 N.~C.~Verhoeven,
\newblock Random 2-SAT and unsatisfiability.
\newblock\emph{Information Professing Letters}, 72(3-4):119--124, 2000.
  
\bibitem{Wormald95}
 N.~C.~Wormal,
\newblock Differential equations for random processes and random graphs.
\newblock\emph{Annals of Applied Probability}, 5:1217--1235,1995.


\end{thebibliography}
\end{document}